
\typeout{Anytime Capacity Expansion in hospital Choice Problem by Monte Carlo Tree Search}


\documentclass{article}
\pdfpagewidth=8.5in
\pdfpageheight=11in
\usepackage{ijcai22}

\usepackage{times}
\usepackage{soul}
\usepackage{url}
\usepackage[hidelinks]{hyperref}
\usepackage[utf8]{inputenc}
\usepackage[small]{caption}
\usepackage{graphicx}
\usepackage{amsmath}
\usepackage{amsthm}
\usepackage{booktabs}
\usepackage{algorithm}
\usepackage{algorithmic}
\urlstyle{same}

\newif\ifijcai






\pdfinfo{
/TemplateVersion (IJCAI.2022.0)
}

\title{Anytime Capacity Expansion in Medical Residency Match by Monte Carlo Tree Search\thanks{Atsushi Iwasaki was supported by JSPS KAKENHI Grant Numbers JP21H04890 and JP20K20752.
\ifijcai
A full version of this paper is found in \url{https://arxiv.org/abs/2202.06570}.
\else
\fi
}}

\author{
Kenshi Abe$^1$\and
Junpei Komiyama$^2$\and
Atsushi Iwasaki$^3$
\affiliations
$^1$CyberAgent, Inc., $^2$New York University, $^3$University of Electro-Communications
\emails
abe\_kenshi@cyberagent.co.jp, junpei@komiyama.info, atsushi.iwasaki@uec.ac.jp
}

\usepackage{mathtools}
\usepackage{amsmath}
\usepackage{amsthm}
\usepackage{amssymb}

\mathtoolsset{showonlyrefs}  

\usepackage{xcolor}


\usepackage{amsthm}
\usepackage{amsfonts}
\usepackage{amssymb} 
\usepackage{bbm}
\usepackage{graphics}
\usepackage{enumerate}
\usepackage{float}
\usepackage{caption}
\usepackage{subcaption}

\theoremstyle{remark}

\theoremstyle{definition}
\newtheorem{thm}{Theorem}

\newtheorem{definition}{Definition}
\newtheorem{prop}[thm]{Proposition}

\newtheorem{remark}{Remark}

\newtheorem{assp}{Assumption}

\usepackage{amsmath} 
\usepackage{amssymb}
\usepackage{bm}
\usepackage{ascmac}
\usepackage{setspace}
\usepackage[at]{easylist}

\usepackage{amsmath}
\DeclareMathOperator*{\argmax}{arg\,max}

\newcommand{\Real}{\mathbb{R}}
\newcommand{\Natural}{\mathbb{N}}


\newcommand{\Prob}{\mathbb{P}}

\newcommand{\Ind}{\bm{1}}



\newcommand{\ED}{\mathcal{D}}
\newcommand{\EE}{\mathcal{E}}

\newcommand{\EH}{\mathcal{H}}

\newcommand{\EM}{\mathcal{M}}

\newcommand{\ET}{\mathcal{T}}

\newcommand{\ETall}{\mathcal{T}_{\mathrm{all}}}
\newcommand{\UCB}{\mathrm{UCB}}

\newcommand{\bsucc}{\bm{\succ}}

\newcommand{\bt}{\bm{t}}
\newcommand{\bx}{\bm{x}}

\newcommand{\bp}{\bm{p}}
\newcommand{\bq}{\bm{q}}


\newcommand{\hatmu}{\hat{\mu}}

\newcommand{\rank}{\mathrm{rank}}

\newcommand{\Conf}{C_p}

\allowdisplaybreaks


\begin{document}

\maketitle

\begin{abstract}
This paper considers the capacity expansion problem in two-sided matchings, where the policymaker is allowed to allocate some extra seats as well as the standard seats. %
%
In medical residency match, each hospital 
accepts a limited number of doctors. 
Such capacity constraints are typically given in advance. However, such exogenous constraints can compromise the welfare of the doctors; some popular hospitals inevitably dismiss some of their favorite doctors. Meanwhile, it is often the case that the hospitals are also benefited to accept a few extra doctors. 
To tackle the problem, we propose an anytime method that 
the upper confidence tree searches the space of capacity expansions, each of which has a resident-optimal stable assignment that the deferred acceptance method finds.
Constructing a good search tree representation significantly boosts the performance of the proposed method. Our simulation shows that the proposed method identifies an almost optimal capacity expansion with a significantly smaller computational budget than exact methods based on mixed-integer programming.
\end{abstract}

\section{Introduction}
\label{sec:introduction}
This paper considers the capacity expansion problem in two-sided matchings,
where the policymaker is allowed to allocate some extra seats as well as the standard seats.
The two-sided matchings have a lot of real applications such as medical residency match \cite{roth:aer:1991} and school choice \cite{Abdulkadiroglu:aer:2003}.
The theory has been extensively developed across computer science and economics~\cite{Roth:CUP:1990,manlove:2013}.
To establish our model and concepts, we use medical residency match as a running example.
In this literature, each hospital (school) accepts a limited number of residents (students). 
Such capacity constraints (or quotas) are assumed to be known and fixed in advance.

However, in practice, even for hospitals, or its stakeholders, it is often uncertain 
how the capacity constraints are specified beforehand. Those could be flexible and variable
more than the previous studies preclude. 
By pooling some extra funding, hospitals can accept a few more residents. 
Schools may be granted some budgets or dispatched teachers from their district according to the demand.

Such a situation involves a lot of resource allocation problems where capacities are predefined,
e.g., how many jobs each machine takes in the jobshop scheduling problem.
How we should choose the capacities has been paid less attention than it deserves, though
they influence the performance generated by allocations. 

A classical comparative statistic already analyzes the effect of such expansion. 
If the capacity of some hospitals is expanded, the welfare of every resident weakly improves~\cite{
Roth:CUP:1990}. However, it is rarely investigated how capacities should be expanded to improve the total welfare. 
%
\citeauthor{bobbio2021capacity}~\shortcite{bobbio2021capacity} have initiated the question of how limited extra seats should be allocated, keeping resulting matchings \textit{stable}. The stability is a key concept for the two-sided matching and if a matching is stable,
no groups of residents and hospitals have profitable deviations. The celebrated deferred acceptance (DA) algorithm is known to find a stable matching~\cite{galeshapley}.  
The capacity expansion with DA, i.e., finding the optimal allocation of extra seats in the welfare of residents, 
is formulated as an integer quadratic programming~\cite{bobbio2021capacity}, which is computationally challenging to solve exactly.
In addition to the exact method, they also developed some greedy heuristics, which run very effectively yet are suboptimal.

\if0
\begin{table}[t!]
\begin{center}
\caption{Comparison of methods for the capacity expansion.
}
\label{tbl_demand}
\begin{tabular}{lcccc} 
& Optimality & Anytime & Diversity \\ \hline
Greedy & 
& 
\checkmark
\\
Exact & 
\checkmark
& 
\\
Our Method & 
\checkmark
& 
\checkmark
&
\checkmark
\\
\hline
\end{tabular}
\end{center}
\end{table}
\fi

\if0
The following properties are desired in solving the capacity expansion problem. 
\begin{enumerate}
    \item Optimality (asymptotic): It can spend the computational time flexibly. If a policymaker spends a sufficiently large computational budget, the method returns the optimal solution.
    \item Anytimeness: A policymaker can stop the computation anytime to receive a current best solution. This is important given the hardness of obtaining the optimal solution. 
    \item Diversity: The method visits and can recommend many (if suboptimal) solutions among which the policymaker can choose. This is particularly important when there are some implicit constraints/preferences.
\end{enumerate}
\fi
This paper proposes 
an alternative method to solve the capacity expansion problem where the upper confidence tree (UCT) searches the space of capacity expansions.\footnote{An implementation of our method is available at \url{https://github.com/CyberAgentAILab/uct_capacity_expansion}.} Each pattern of them has a resident-optimal stable assignment that DA finds. Not only does UCT obtain an optimal solution given a sufficiently large time, but also a policymaker can stop UCT anytime to obtain a reasonably good solution, which is important given the hardness of obtaining an optimal solution. 
%
We then characterize a good tree representation so that the tree search method efficiently exploits the structure of the capacity expansion problem.
There have been a certain amount of studies on two-sided matchings in the AI community, although this literature has been established mainly in fields across algorithms and economics%
%
~\cite{kamada:aer:2015,biro:tcs:2010,fragiadakis::2012,Goto:aamas:2014,Goto:aij:2016}. %
In many application domains, various distributional constraints are imposed on an outcome, e.g., regional maximum quotas are imposed 
on hospitals in urban areas to allocate more doctors to
rural areas~\cite{kamada:aer:2015}.

\section{Problem Setup}

Let $\ED=\{d_1,d_2,\ldots,d_{D}\}$
be the set of residents (doctors) and 
$\EH = \{h_1, h_2,\dots,h_{H}\}$ be the set of hospitals. 
We consider a bipartite graph where $\ED$ and $\EH$ are connected by edges $\EE$. Each edge represents that the resident is accepted by the hospital. 
Residents and hospitals have their own preferences. We denote $h \succ_d h'$ if resident $d$ prefers hospital $h$ to hospital $h'$. Similarly, we denote $d \succ_h d'$ if hospital $h$ prefers resident $d$ over resident $d'$.
A vector $\bq = (q_1, q_2, \dots, q_H)$ represents the quota of the hospitals.
An instance of the residency match problem is a tuple $\Gamma = (\ED, \EH, \bsucc, \bq)$, where $\bsucc$ denotes all preferences of the residents and the hospitals.

A matching $M \in \EM$ is a subset of $\EE$ such that the following criteria are met.
Each resident $d\in \ED$ has at most one edge. Each hospital $h\in \EH$ can have at most $q_h$ edges. We say that resident $d$ is unassigned if there is no edge from $d$.
We denote $M(d)$ and $M(h)$ to represent the hospital assigned to resident $d$ and the set of the residents assigned to hospital $h$, respectively.
Given a matching $M$, we say a pair $(d,h)\in \EE$ is a \textit{blocking pair} if (1) resident $d$ is unassigned or prefers hospital $h$ to $M(d)$, and (2) hospital $h$ is such that $|M(h)|<q_h$ or prefers resident $d$ over at least one resident in $M(h)$. Matching $M$ is stable if there is no blocking pair. 

\citeauthor{galeshapley}~\shortcite{galeshapley} showed that a stable matching always exists and proposed the deferred-acceptance (DA) algorithm that yields a stable matching. 
The resident-proposing DA finds the resident-optimal stable matching where no resident is better off among all stable matchings~\cite{Roth:CUP:1990}. That matching is also obtained by 
minimizing the total resident ranking under the stability constraints.
A resident ranking is denoted by $\mathrm{rank}_{d}(h)$ the rank of hospital $h$ according to the resident $d$'s preference. This is also referred to as the \textit{cost} of the match $(d,h)$. 
Given a matching $M$, the total resident ranking is defined as 
\begin{equation}\label{eq_utility}
\sum_{(d,h) \in M} \rank_d(h). 
\end{equation}
%
%
DA gives a matching that minimizes Eq.~\eqref{eq_utility} among all stable matchings \cite{bobbio2021capacity}.





This paper improves 
the resident utilities by allowing additional seats. Letting $\bt \in \Natural^\EH$ be a non-negative vector, we consider an expanded matching problem in which the capacity of each hospital $h$ is the sum of regular capacity $q_h$ and the extended capacity $t_h$.  
A reasonable extension of the capacity should not be very large and can accommodate the demand of each hospital. Given this, 
we consider the following optimization problem: 
Let 
\begin{align} 
    \mathcal{P}&=\Biggl\{ 
     (\bx,\bt)\in \{0,1\}^\mathcal{E}\times\Theta  \ \biggl\vert\ 
    \sum_{h\in \EH} x_{dh}\leq 1\ \forall d\in \ED,\ \\
    &\ \ \ \ \ \ \ \ \sum_{d\in \ED} x_{dh}\leq (q_{h}+t_{h})\ \forall h\in \EH\Biggr\}, \mathrm{and}\\
\Theta &= \left\{
\bt \in \{0,1,2,\dots,B\}^{\EH} 
\ \biggl\vert\ 
t_h \le b_h\ \forall h, \sum_{h \in \EH} t_{h} \le B
\right\}
\end{align}
be the set of matchings. 
The capacity expansion problem is the following: 
\begin{align}\label{opt_main}
    \min_{\bx,\bt} & \sum_{(d,h)\in\mathcal{E}} \mathrm{rank}_{d}(h)x_{dh}\\
    \mathrm{s.t.}\ & (q_h+t_h)\left(1-\sum_{h'\in S_{dh}}x_{dh'}\right)\leq \sum_{d'\in T_{dh}}x_{d'h}\ \forall (d,h)\in \EE,
    \\
    & (\bx,\bt)\in \mathcal{P}.
\end{align}
where $S_{dh} = \{h' \in \EH: \rank_d(h') \le \rank_d(h)\}$ is the set of indices of hopsitals that resident $d$ prefers at
least as hospital $h$; similarly, $T_{dh} = \{d' \in \ED: \rank_h(d') < \rank_h(d)\}$ is the set of indices of residents
that hospital $h$ prefers more than resident $d$. 
Namely, keeping stability, we maximize the residents' welfare subject to the capacity constraints of the hospitals, where the capacity can be relaxed up to $B$ seats. Moreover, we assume that each hospital $h$ has its own expansion limit~$b_h$.

\begin{remark}{\rm (Complete preference)}
The optimization of Eq.~\eqref{opt_main} requires a complete preference of all residents and hospitals, which is often impractical. For example, in the real data of a residency matching program, an average resident only applies to four or five hospitals. In this case, we can add a dummy hospital of infinite capacity and rank all unpreferred hospitals after the dummy hospital.
\end{remark}

\citeauthor{bobbio2021capacity}~\shortcite{bobbio2021capacity} proved the computational hardness of the problem of Eq.~\eqref{opt_main} without hospital-wise limit $b_h$.

\begin{prop}
{\rm (Hardness result~\cite{bobbio2021capacity})}
\label{prop_hardness}
The decision version of  Eq.~\eqref{opt_main} is NP-complete even in the case where there is no hospital-wise limit $b_h$.
\end{prop}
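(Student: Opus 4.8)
The plan is to establish NP-completeness by (i) noting membership in NP and (ii) reducing from a known NP-complete problem. Membership in NP is immediate: given a candidate expansion vector $\bt$ and matching $\bx$, one verifies in polynomial time that $(\bx,\bt)\in\mathcal{P}$, that the stability constraints hold (this is just checking each pair $(d,h)\in\EE$ for the blocking condition), and that the objective $\sum_{(d,h)\in\EE}\rank_d(h)x_{dh}$ does not exceed the target value in the decision version. Since DA runs in polynomial time, one may equivalently just verify that $\bx$ is the resident-optimal stable matching for capacities $\bq+\bt$ and compare its cost to the threshold.

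For hardness, the natural route is a reduction from a packing/covering-type problem such as \textsc{Vertex Cover} or \textsc{3-Dimensional Matching} (or possibly \textsc{Max $k$-Cover}), encoding the choice of which hospitals receive extra seats as the combinatorial choice made by the cover/packing. First I would set $B$ to equal the size parameter $k$ of the source instance, so that choosing an expansion vector $\bt$ with $\sum_h t_h \le B$ corresponds to selecting at most $k$ elements. Then I would design gadgets—residents and hospitals with carefully chosen regular quotas $q_h$ and preference lists $\bsucc$—so that expanding hospital $h$ by one seat ``rescues'' exactly the residents associated with the corresponding element of the ground set, pushing their cost down from a high rank to a low rank, while the stability constraint forces each rescued resident into precisely the intended hospital. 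The key point is that because DA returns the \emph{unique} resident-optimal stable matching for any fixed capacity profile, the objective value is a deterministic function of $\bt$, so the decision question ``is there $\bt$ with cost $\le K$?'' becomes ``is there a size-$\le B$ selection of hospitals covering/packing enough structure?''.

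The main obstacle will be controlling the cascading effects of capacity changes through the DA dynamics: expanding one hospital can displace residents in ways that ripple across the whole market, so the gadget must be engineered so that these rejection chains stay local and the cost contribution of each hospital's expansion is (essentially) additive and independent of the others. A standard device here is to make the ground-set hospitals mutually ``non-interfering'' by giving each its own private pool of residents whose only acceptable alternatives are a common sink hospital of large capacity; then expanding hospital $h$ affects only $h$'s private residents, and the total cost is a separable sum over the chosen hospitals, which is exactly what a clean reduction needs. Once separability is in place, choosing the numeric parameters (the high and low ranks, the threshold $K$ as a function of $B$ and the source instance) is routine arithmetic, and correctness in both directions follows by matching ``good expansions'' with ``good covers/packings.'' Since the statement is attributed to \citeauthor{bobbio2021capacity}, I would expect their construction to follow essentially this template, and note that the hardness persists here because our added per-hospital limits $b_h$ only shrink the feasible set—setting $b_h \ge B$ for all $h$ recovers the unconstrained case.
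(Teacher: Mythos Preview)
The paper does not prove this proposition; it is stated without proof and attributed to \cite{bobbio2021capacity}, so there is no in-paper argument to compare your proposal against. Your sketch is a sensible generic template for NP-completeness---polynomial verifiability for membership, and hardness via a reduction in which the choice of which hospitals receive extra seats encodes a cover/packing selection---and your closing observation that taking every $b_h\ge B$ recovers the unconstrained case is exactly why the cited result transfers to the present formulation. As written, though, it is a plan rather than a proof: you never fix a source problem or give the gadget, and the ``private resident pools draining to a common sink'' separability device would still need to be verified against the global stability constraint in Eq.~\eqref{opt_main}, since a seat added at one hospital can in principle trigger rejection chains that reach other gadgets. For the actual reduction you should consult \cite{bobbio2021capacity} directly.
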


The key idea of our algorithm for this problem is that, when we fix an expansion vector $\bt \in \Theta$, the capacity expansion problem of Eq.~\eqref{opt_main} boils down to the standard matching instance $\Gamma = (\ED, \EH, \bsucc, \bq+\bt)$ that we can solve efficiently by DA.\footnote{The computational complexity of DA is $O((D+H)^2)$.} Therefore, a tree search on the space of $\Theta$ combined with DA is expected to give an efficient solution. 


\begin{figure*}
	\centering
	\includegraphics[bb=0 0 923 386, scale=0.40]{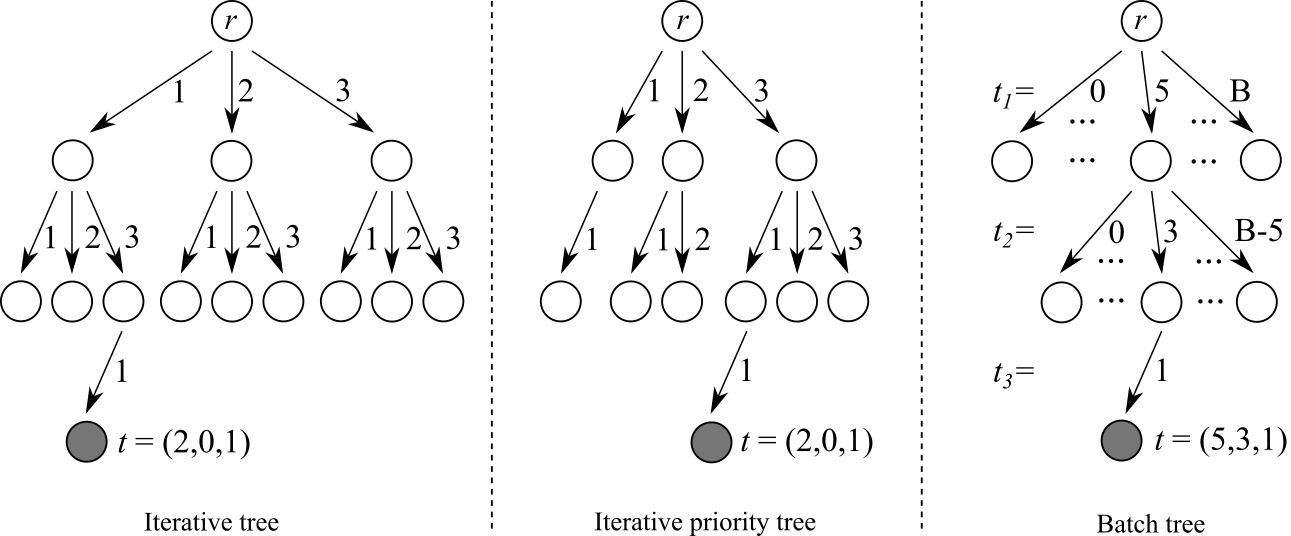}
	\caption{Three tree representations. In the iterative tree (left), the node of the grey dot corresponds to a vector $\bt = (2, 0, 1)$, with its corresponding path expanding hospitals $1$ and $3$ once, and expanding hospital $1$ once again. 
	An edge in the iterative priority tree (middle) is the same as the iterative tree, but its path only includes nonincreasing order of edges (e.g., $1 \rightarrow 3 \rightarrow 1$ is \text{not} allowed).
	An edge in the batch tree (right) corresponds to how many extended seats we allocate to each hospital ($= t_1, t_2, t_3$). The node of the grey dot in the batch tree corresponds to a vector of $\bt = (5,3,1)$. 
	Ordering of the hospitals matters in the iterative priority and the batch trees; the hospitals that receive large expansion should be numbered earliest. All leaves of the iterative and iterative priority trees are at depth $B$, whereas the path length to a leaf of the batch tree varies: It stops branching when $\sum_h t_h = B$.
	}
	\label{fig_treerep}
\end{figure*}

\section{Method}




This section proposes our algorithm to solve  Eq.~\eqref{opt_main}. 
We apply Upper Confidence Tree (UCT) \cite{kocsis06}, which belongs to a class of Monte Carlo tree search methods \cite{mctssurvey2012}. One of the most successful examples of UCT is for abstract games such as the Game of Go \cite{WangG07,YoshizoeKKYI11,Silver2016} where the goal is to find the best next move in a game tree. Another aspect of UCT is diverse recommendation method, which minimizes total regret, searches for different
solutions of reasonable objective value~\cite{Bosc2018}.
In this paper, we use UCT to find a global optimum node in a tree $\ETall$. 

UCT consists of a sequential process of traversing a tree structure where each tree node $i$ is associated with its value $v_i$. We use $n\in \{1,2,\dots,N\}$ to denote the number of rounds. The entire tree $\ETall$ is typically very large, and UCT tries to develop a subtree of it. 
We denote the subtree of round $n$ by $\ET(n)$. 
Each node $i$ in the current tree is associated with a tuple $(V_i, N_i) \in \Real^2$, where $V_i$ is the sum of rewards and $N_i$ the number of times at which node $i$ is traversed. The value $\hatmu_i := V_i/N_i$ is an estimator of $v_i$, and its standard deviation is proportional to $\sqrt{1/N_i}$.
Each round consists of \textit{selection}, \textit{development}, \textit{simulation}, and \textit{backpropagation} steps. 

\noindent\textbf{Selection:} At each iteration, UCT traverses the tree from the root. At each node $i$, it chooses a child node $c$ with the maximum UCB value:
\begin{equation}
\UCB(c) := \hatmu_{c} + \Conf \sqrt{\frac{\log(N_i)}{N_c}},
\end{equation}
where $\Conf > 0$ is the parameter that determines how much exploration it attempts.

\noindent\textbf{Development:} 
When it reaches a node $k$ that is out of the current tree, it adds the node to the current tree.

\noindent\textbf{Simulation:} 
Based on the reached node $k$, it conducts a random play to find $l$, which is a leaf that stems from $k$.

\noindent\textbf{Backpropagation:} 
Letting $v_l$ be the value of the leaf $l$, it updates the statistics of all nodes that we have traversed in this round: $V_i \leftarrow V_i + v_l, N_i \leftarrow N_i + 1$ for each node $i$ during the backpropagation.

The four steps above are quite standard in UCT. 
\ifijcai
For a more algorithmic description, see the full version.
\else
For a more algorithmic description, see Algorithm \ref{alg_proposed} in the appendix.
\fi






\subsection{Tree representation of expansion space}

Unlike abstract games where the game tree is inherent, the relation between a tree and the capacity expansion in our problem is nontrivial. 
This section describes the mapping from a tree node into an expansion $\bt \in \Theta$.

In the capacity expansion problem defined in Eq.~\eqref{opt_main}, 
DA yields an optimal solution given a fixed expansion vector $\bt$. Therefore, we consider a tree where each node corresponds to an expansion vector $\bt$, and each leaf is an expansion of limit $\sum_{h \in \EH} t_h = B$. The value $v_i$ at each node $i$ is the objective value of 
DA for the corresponding expansion.

\begin{table}[b!]
\begin{center}
\caption{Comparison of the three tree representations with respect to the criteria.
}
\label{tbl_criteria}
\scalebox{0.78}{%
    {\renewcommand\arraystretch{1.2}
      \setlength{\tabcolsep}{4pt}
\begin{tabular}{lcccc} \toprule
& Faithfulness & Nonredundancy & Decisiveness \\ \midrule
Iterative tree & 
\checkmark
& 
\\
Iterative priority tree (IPT) & 
\checkmark
& 
\checkmark
\\
Batch tree (BT) & 
\checkmark
& 
\checkmark
&
\checkmark
\\
\bottomrule
\end{tabular}}}
\end{center}
\end{table}

We consider the following criteria are important in constructing a good tree representation of an expansion. 
\begin{enumerate}
\item Faithfulness: The tree preserves the inclusion relationship of the original expansion vector space: if node $j$ is a descendant of node $i$, then $\bt(j) - \bt(i) \ge 0$, where $\bt(i), \bt(j)$ be corresponding expansions of nodes $i,j$ and $\bt(j) - \bt(i) \ge 0$ means all the features of the vector $\bt(j) - \bt(i)$ are non-negative.
\item Nonredundancy: The mapping is one-to-one. Namely, for any expansion vector $\bt \in \Theta$ such that $\sum_h t_h = B$, there exists only one leaf of tree $\ETall$. Redundancy in a tree representation compromises the search efficiency.
\item Decisiveness: Assuming that a tree representation is faithful, the tree representation is decisive if branching in a shallow layer is more informative than the one in a deep layer. In our case, allocations related to important hospitals should appear in a shallow layer of the tree.    
\end{enumerate}
Considering those criteria, we propose the three tree representations 
in Figure~\ref{fig_treerep}.
In all of the three representations, the root node $r$ corresponds to the zero vector $\bt = (0,0,\dots,0)$ that corresponds to no expansion. 
Table~\ref{tbl_criteria} illustrates which properties are satisfied in each of the representations. 
We will discuss the idea behind them. 

The iterative representation, which is the most straightforward, builds an $H$-ary tree.\footnote{Note that $H$ is the number of the hospitals.} 
Each edge corresponds to allocating a seat to one of the hospitals. As a result, each path from the root to depth $B$ represents an expansion of size $B$. Although the iterative representation is faithful, it is redundant. For example, consider the case of three hospitals. Let $1\rightarrow 2 \rightarrow 2$ to denote a path on the tree that sequentially expands hospitals $1,2,2$. An expansion vector $(2, 1, 0)$ corresponds to three different path on the tree ($1\rightarrow 1 \rightarrow 2$, $1\rightarrow 2 \rightarrow 1$, and $2 \rightarrow 1 \rightarrow 1$).
As a result, UCT with iterative tree representation searches an unnecessary large representation space, which increases the computational burden.  

To deal with this issue, we introduce an improved tree representation of the expansion, which we call the iterative priority tree (IPT). For each node $i$, it only allocates a node with its priority higher than the most prioritized hospital that has been allocated a seat.
IPT solves the issue of redunduncy in the iterative tree:
\begin{prop}{\rm (Nonredundancy of IPT)}\label{prop_prioit_nonred}
For each expansion vector $\bt \in \Natural^\EH$, there exists a unique node in IPT. 
\end{prop}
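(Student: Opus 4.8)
The plan is to produce an explicit bijection between the nodes of IPT and the feasible expansion vectors, which gives existence and uniqueness at once. First I would pin down the formalization. Fix the numbering of the hospitals by priority so that a smaller index means higher priority. By the construction of IPT, a path from the root is a finite sequence of hospital indices $(h_1,h_2,\dots,h_k)$ with $k\le B$ in which each new seat goes to a hospital whose priority is no higher than that of the hospital receiving the previous seat; i.e.\ $h_1\le h_2\le\cdots\le h_k$, together with the feasibility constraints inherited from $\Theta$ (the $t_h\le b_h$ caps and $\sum_h t_h\le B$). This non-decreasing condition is exactly what distinguishes IPT from the plain iterative tree and matches the example in Figure~\ref{fig_treerep} (so, e.g., $1\to3\to1$ is excluded while $1\to1\to3$ is the unique admissible reordering). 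I identify each node with the unique root-to-node path, so distinct nodes correspond to distinct paths; I also read the statement as referring to feasible $\bt$ (i.e.\ $\bt\in\Theta$), since infeasible vectors have no node at all.

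Next I would introduce the map $\phi$ sending a path $(h_1\le\cdots\le h_k)$ to the vector $\bt$ defined by $t_h=\lvert\{\,j\in\{1,\dots,k\}:h_j=h\,\}\rvert$, the number of times hospital $h$ is allocated a seat along the path. Then $\sum_h t_h=k\le B$ and $t_h\le b_h$ by admissibility of the path, so $\phi$ lands in $\Theta$; and $v_i$ at that node is the DA objective at this $\bt$, consistent with the value assignment of Section Method. The key combinatorial fact I will invoke is standard: a non-decreasing finite word over $\{1,\dots,H\}$ is recovered uniquely from the multiplicities of its letters by writing those letters in non-decreasing order — equivalently, weak compositions into $H$ parts are in bijection with non-decreasing words (``stars and bars'').

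I would then verify that $\phi$ is a bijection. Injectivity: if paths $p=(h_1\le\cdots\le h_k)$ and $p'=(h'_1\le\cdots\le h'_{k'})$ have $\phi(p)=\phi(p')=\bt$, then $k=\sum_h t_h=k'$, and since each of $p,p'$ is the non-decreasing arrangement of a multiset in which letter $h$ has multiplicity $t_h$, we get $h_j=h'_j$ for all $j$, so $p=p'$ and the nodes coincide. Surjectivity (existence): given any $\bt\in\Theta$, form the path that takes $t_1$ successive edges to hospital $1$, then $t_2$ edges to hospital $2$, \dots, then $t_H$ edges to hospital $H$; this word is non-decreasing, has length $\sum_h t_h\le B$, and every prefix uses at most $t_h\le b_h$ copies of each hospital, so it is an admissible IPT path with $\phi$-image $\bt$. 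Hence every feasible $\bt$ corresponds to exactly one node of IPT, which is the claim.

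I do not expect a genuine obstacle here: the argument is bookkeeping on top of the stars-and-bars correspondence. The one point demanding care is the formalization of ``IPT path'' — transcribing the priority restriction as $h_{j+1}\ge h_j$ (allowing repeated allocations to the same hospital, not $h_{j+1}>h_j$) so that the node set is \emph{precisely} the set of non-decreasing, feasible words of length $\le B$, and imposing the caps $t_h\le b_h$ and the budget $\sum_h t_h\le B$ on both sides of the bijection so that no spurious mismatch appears. Once that is fixed, injectivity and surjectivity of $\phi$ are immediate.
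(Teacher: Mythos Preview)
Your proof is correct and is essentially the same idea as the paper's: both rest on the observation that an expansion vector $\bt$ corresponds to exactly one monotone sequence of hospital indices (the paper states this in a single line, with the example $\bt=(1,3,1)\leftrightarrow 3\to2\to2\to2\to1$), which is precisely your multiplicities-to-sorted-word bijection spelled out in full. One minor point: the paper's IPT actually uses \emph{nonincreasing} paths, so the admissible reordering of $1\to3\to1$ is $3\to1\to1$, not $1\to1\to3$; this is only a convention flip and your argument goes through verbatim with the inequality reversed.
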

\begin{proof}
Each expansion uniquely corresponds to a nonincreasing sequence of numbers. For example, expansion $\bt = (1, 3, 1)$ corresponds to $3\rightarrow 2\rightarrow 2\rightarrow 2\rightarrow 1$. It is easy to confirm that, in the iterative priority tree, this is the unique path that leads to the desired expansion.
\end{proof}

While the IPT is nonredundant, there still remains some space for improvement. 
The batch tree (BT) representation fully exploits the priority information to allocate more than one seats to popular nodes, which reduces the depth of the optimal solution in the tree. Namely, each depth of BT corresponds to how many seats to allocate to each hospital. The following proposition states that the properties of BT:
\begin{prop}{\rm (Nonredundancy and decisiveness of BT)}
BT is nonredundant. It is also decisive, that is, if the optimal solution aligns with the expansion vector (i.e., $t_1 \ge t_2 \ge t_3 \ge \dots \ge t_H$), then BT minimizes the depth of the optimal node among all trees such that each edge allocates seats to one hospital.
\end{prop}
\begin{proof}
The nonredundancy is trivial. 
In the following, we derive the decisiveness of the BT.
Assume that the optimal expansion vector is such that  $t_1 \ge t_2 \ge t_3 \ge \dots \ge t_H$, and let $h$ be the first index such that $t_h = 0$. BT includes the path of depth $h-1$ that leads to this allocation.
\end{proof}




\subsection{Ordering hospitals}

We introduce two ideas for ordering hospitals that we use in IPT and BT. 
The first idea is to order hospitals in terms of their popularity
\begin{equation}\label{def_pop}
\mathrm{Popularity}(h) := \sum_{d \in \ED} \rank_d(h).
\end{equation}
We use the term ``popularity'' because it is the total rank of hospital $h$ in view of residents. Note that this value is also referred to as the Borda count in the context of social choice~\cite{moulin:1994}. A smaller value  represents a more popular hospital.

The second idea is utilizing \textit{potential envies} that may not be justified so that we could construct a better decision tree of the hospitals. 
Namely, let 
\begin{equation}\label{def_envy}
\mathrm{Envy}(h) :=
\sum_{d \in \ED} \Ind[\rank_d(h) < \rank_d(M(d))]
\end{equation}
be the potential envy that the residents have toward the ones matched to hospital $h$ in matching $M$. 
The score indicates the number of residents who prefer hospital $h$ over their matched hospital in $M$. 
For calculating $M$, we run DA with no expansion before running UCT to determine the ordering.
%




\paragraph{Avoiding multiple entries.}
\label{subsec_mark}

Since the evaluation of each node (i.e., DA with a given expansion vector) is deterministic, we can restrict the node selection so that a node is never evaluated twice. 
In the backpropagation step, we mark the node $k$ as ``evaluated'' if it is a leaf of $\ETall$. Moreover, for each ascendant node, if all children of the node are evaluated, then we mark that node as evaluated. In the selection and simulation phase, the evaluated nodes are never visited again.

\begin{table*}[t]
    \centering
    \small
        \caption{Average percentage gaps between the solution found by the Agg-Lin and the solution found by each method for Set~1 experiments. UCT with iterative priority trees has three different orderings, i.e., random, popularity, and envy-based, each of which is denoted by IPT-R, IPT-P, and IPT-E. As well, UCT with batch trees has the three variants: 
        BT-R, BT-P, and BT-E. ``$0.0$'' indicates the value is equal to Agg-Lin.
        Negative values mean the temporal value of Agg-Lin is outperformed. 
    }
    \begin{tabular}{|rrr||r|r||r|r|r|r|r|r|r|} \hline
        \multicolumn{1}{|c}{$H$} & \multicolumn{1}{c}{$B$} & \multicolumn{1}{c||}{$\alpha$} & \multicolumn{2}{c||}{Baseline} & \multicolumn{7}{c|}{\textbf{UCT (proposed)}} \\ \cline{4-12}
        &&& \multicolumn{1}{c|}{LPH} & \multicolumn{1}{c||}{Grdy} & \multicolumn{1}{c|}{Iterative} & \multicolumn{1}{c|}{IPT-R} & \multicolumn{1}{c|}{IPT-P} & \multicolumn{1}{c|}{IPT-E} & \multicolumn{1}{c|}{BT-R} & \multicolumn{1}{c|}{BT-P} & \multicolumn{1}{c|}{BT-E} \\ \hline
        $5$ & $5$ & $0.0$   &  $7.5$ &  $5.7$ &   $0.0$ &   $0.0$ &  $0.0$ &  $0.0$ &  $0.0$ &   $0.0$ &   $0.0$ \\ 
        $5$ & $30$ & $0.0$  &  $6.3$ & $32.9$ &   $1.3$ &  $34.5$ &  $0.0$ &  $0.0$ &  $0.4$ &   $0.0$ &   $0.0$ \\ 
        $15$ & $5$ & $0.0$  &  $8.9$ &  $4.6$ &   $0.5$ &   $0.5$ & $0.05$ & $0.09$ &  $1.1$ &   $1.1$ &   $1.1$ \\ 
        $15$ & $30$ & $0.0$ & $23.2$ & $25.3$ &  $17.8$ &  $18.0$ & $15.5$ & $10.4$ & $19.9$ &  $15.4$ &   $6.9$ \\ \hline
        $5$ & $5$ & $0.2$   &  $1.8$ &  $1.4$ &   $0.0$ &   $0.0$ &  $0.0$ &  $0.0$ &  $0.0$ &   $0.0$ &   $0.0$ \\ 
        $5$ & $30$ & $0.2$  &  $3.6$ &  $6.4$ &   $1.6$ &   $0.9$ &  $0.1$ & $0.03$ &  $0.4$ &  $0.09$ &   $0.1$ \\ 
        $15$ & $5$ & $0.2$  &  $2.6$ &  $0.8$ &  $0.09$ &   $0.0$ &  $0.0$ &  $0.0$ &  $0.2$ &  $0.06$ &  $0.06$ \\ 
        $15$ & $30$ & $0.2$ &  $4.1$ &  $4.3$ &   $2.7$ &   $2.5$ &  $1.3$ &  $1.4$ &  $1.4$ &   $0.3$ &   $0.3$ \\ \hline
        $5$ & $5$ & $0.4$   &  $0.6$ &  $0.2$ &   $0.0$ &   $0.0$ &  $0.0$ &  $0.0$ &  $0.0$ &   $0.0$ &   $0.0$ \\ 
        $5$ & $30$ & $0.4$  &  $1.1$ &  $2.1$ &   $1.4$ &   $0.1$ &  $0.0$ &  $0.0$ &  $0.1$ &   $0.0$ &   $0.0$ \\ 
        $15$ & $5$ & $0.4$  &  $1.1$ &  $0.3$ & $-0.08$ & $-0.09$ & $-0.1$ & $-0.1$ & $0.02$ & $-0.09$ & $-0.09$ \\ 
        $15$ & $30$ & $0.4$ &  $0.8$ &  $0.8$ &   $0.5$ &  $0.05$ & $-0.6$ & $-0.6$ & $0.04$ &  $-0.8$ &  $-0.8$ \\ \hline
    \end{tabular}
\label{tb:average_gap_synthone}
\end{table*}

\begin{table*}[t]
    \centering
    \small
    \caption{Average run times for Set~1 (seconds).}
    \label{tb:run_time_synthone}
    \begin{tabular}{|rrr||r|r|r||r|r|r|r|r|r|r|} \hline
        \multicolumn{1}{|c}{$H$} & \multicolumn{1}{c}{$B$} & \multicolumn{1}{c||}{$\alpha$} & \multicolumn{3}{c||}{Baseline} & \multicolumn{7}{c|}{\textbf{UCT (proposed)}} \\ \cline{4-13}
        &&& \multicolumn{1}{c|}{Agg-Lin} & \multicolumn{1}{c|}{LPH} & \multicolumn{1}{c||}{Grdy} & \multicolumn{1}{c|}{Iterative} & \multicolumn{1}{c|}{IPT-R} & \multicolumn{1}{c|}{IPT-P} & \multicolumn{1}{c|}{IPT-E} & \multicolumn{1}{c|}{BT-R} & \multicolumn{1}{c|}{BT-P} & \multicolumn{1}{c|}{BT-E} \\ \hline
        $5$ & $5$ & $0.0$   &   $24.01$ & $0.01$ & $0.06$ &  $10.09$ &   $0.64$ &   $0.64$ &   $0.65$ &   $0.86$ &   $0.86$ &   $0.86$ \\ 
        $5$ & $30$ & $0.0$  &   $11.85$ & $0.01$ & $0.19$ &  $35.60$ &  $45.38$ &  $39.84$ &  $39.92$ &  $36.93$ &  $38.83$ &  $38.03$ \\ 
        $15$ & $5$ & $0.0$  &  $854.91$ & $0.02$ & $0.94$ &  $52.11$ &  $56.98$ &  $54.88$ &  $54.31$ &  $54.76$ &  $53.50$ &  $53.37$ \\ 
        $15$ & $30$ & $0.0$ &  $362.87$ & $0.02$ & $3.34$ & $177.00$ & $167.24$ & $166.36$ & $160.77$ & $149.91$ & $139.90$ & $139.74$ \\ \hline
        $5$ & $5$ & $0.2$   &   $37.79$ & $0.01$ & $0.09$ &  $15.59$ &   $0.98$ &   $0.99$ &   $0.99$ &   $1.30$ &   $1.31$ &   $1.32$ \\ 
        $5$ & $30$ & $0.2$  &   $55.79$ & $0.01$ & $0.44$ &  $89.59$ &  $94.44$ &  $93.57$ &  $94.46$ &  $91.78$ &  $90.11$ &  $90.06$ \\ 
        $15$ & $5$ & $0.2$  & $2740.02$ & $0.03$ & $1.80$ & $129.87$ & $129.35$ & $129.18$ & $130.38$ & $125.60$ & $128.63$ & $128.88$ \\ 
        $15$ & $30$ & $0.2$ & $3527.88$ & $0.03$ & $7.14$ & $384.11$ & $376.15$ & $370.17$ & $370.66$ & $343.04$ & $362.58$ & $361.80$ \\ \hline
        $5$ & $5$ & $0.4$   &   $86.30$ & $0.01$ & $0.09$ &  $15.04$ &   $0.98$ &   $0.97$ &   $0.97$ &   $1.30$ &   $1.29$ &   $1.30$ \\ 
        $5$ & $30$ & $0.4$  &  $294.07$ & $0.01$ & $0.54$ & $123.79$ & $130.87$ & $129.49$ & $129.18$ & $123.63$ & $123.91$ & $123.89$ \\ 
        $15$ & $5$ & $0.4$  & $3600.21$ & $0.03$ & $2.38$ & $169.40$ & $166.00$ & $166.96$ & $168.14$ & $168.51$ & $166.51$ & $166.11$ \\ 
        $15$ & $30$ & $0.4$ & $3600.41$ & $0.04$ & $9.24$ & $578.57$ & $606.01$ & $574.18$ & $584.77$ & $553.69$ & $571.35$ & $572.87$ \\ \hline
    \end{tabular}
\end{table*}

\begin{table*}[t]
    \centering
    \small
    \caption{Average percentage gaps between the solution found by the Agg-Lin and the solution found by each method for JRMP.}
    \label{tb:average_gap_jrmp}
    \begin{tabular}{|r||r|r||r|r|r|r|r|r|r|} \hline
        \multicolumn{1}{|c||}{$B$} & \multicolumn{2}{c||}{Baseline} & \multicolumn{7}{c|}{\textbf{UCT (proposed)}} \\ \cline{2-10}
        & \multicolumn{1}{c|}{LPH} & \multicolumn{1}{c||}{Grdy} & \multicolumn{1}{c|}{Iterative} & \multicolumn{1}{c|}{IPT-R} & \multicolumn{1}{c|}{IPT-P} & \multicolumn{1}{c|}{IPT-E} & \multicolumn{1}{c|}{BT-R} & \multicolumn{1}{c|}{BT-P} & \multicolumn{1}{c|}{BT-E} \\ \hline
        $10$ & $0.2$ & $0.3$ & $1.3$ & $1.1$ & $0.04$ & $0.06$ & $0.9$ & $0.02$ & $0.02$ \\ 
        $30$ & $0.4$ & $1.1$ & $4.8$ & $3.7$ &  $0.1$ &  $0.4$ & $1.7$ & $0.02$ & $0.09$ \\ \hline
    \end{tabular}
\end{table*}

\begin{table*}[t]
    \centering
    \small
    \caption{Average run times for JRMP (seconds).}
    \label{tb:run_time_jrmp}
    \begin{tabular}{|r||r|r|r||r|r|r|r|r|r|r|} \hline
        \multicolumn{1}{|c||}{$B$} & \multicolumn{3}{c||}{Baseline} & \multicolumn{7}{c|}{\textbf{UCT (proposed)}} \\ \cline{2-11}
        & \multicolumn{1}{c|}{Agg-Lin} & \multicolumn{1}{c|}{LPH} & \multicolumn{1}{c||}{Grdy} & \multicolumn{1}{c|}{Iterative} & \multicolumn{1}{c|}{IPT-R} & \multicolumn{1}{c|}{IPT-P} & \multicolumn{1}{c|}{IPT-E} & \multicolumn{1}{c|}{BT-R} & \multicolumn{1}{c|}{BT-P} & \multicolumn{1}{c|}{BT-E} \\ \hline
        $10$ & $247.89$ & $0.08$ &  $9.35$ & $20.42$ & $22.11$ & $21.49$ & $22.08$ & $20.58$ & $21.13$ & $21.16$ \\ 
        $30$ & $438.20$ & $0.08$ & $27.12$ & $62.72$ & $74.68$ & $73.18$ & $73.65$ & $77.30$ & $74.98$ & $78.20$ \\ \hline
    \end{tabular}
\end{table*}

\begin{figure*}[t!]
    \centering
    \begin{minipage}[t]{0.32\textwidth}
        \centering
	    \includegraphics[width=1.1\textwidth]{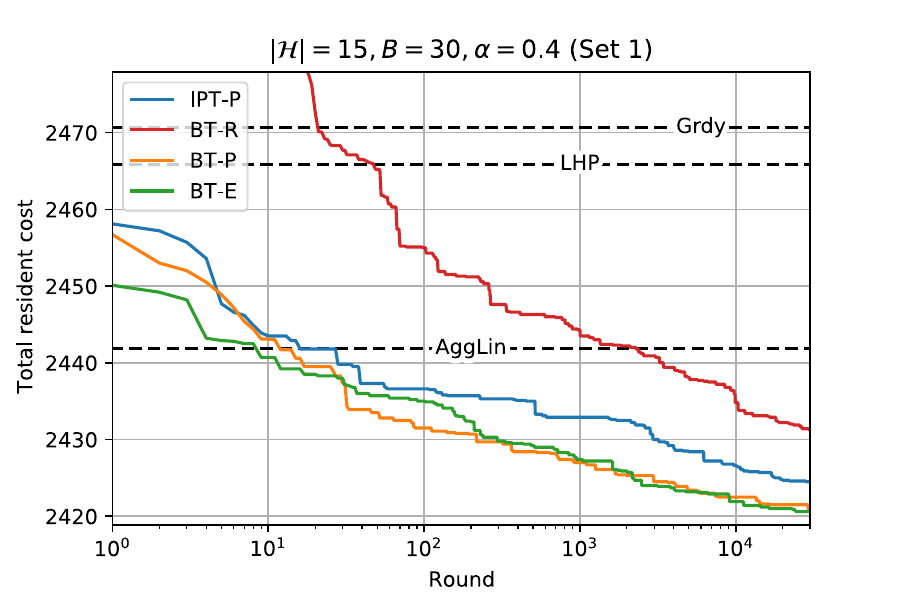}
    \end{minipage}
    \begin{minipage}[t]{0.32\textwidth}
        \centering
	    \includegraphics[width=1.1\textwidth]{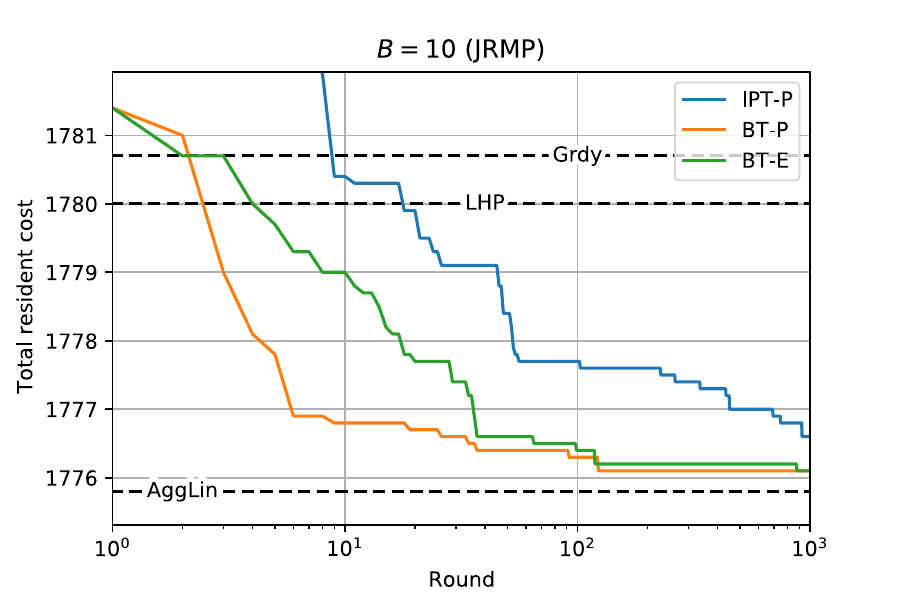}
    \end{minipage}
    \begin{minipage}[t]{0.32\textwidth}
        \centering
	    \includegraphics[width=1.1\textwidth]{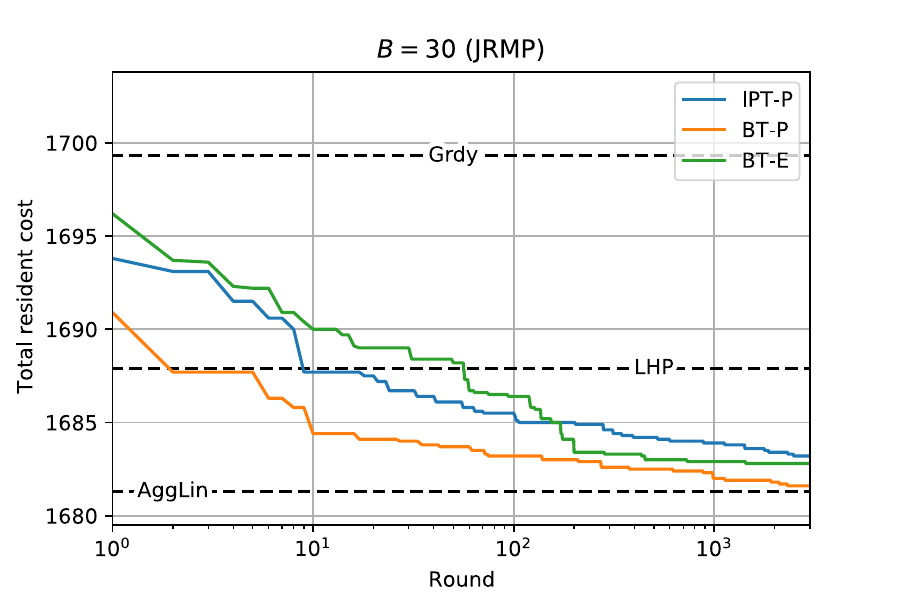}
    \end{minipage}
    \caption{Total resident rankings (costs) of the proposed algorithms. Horizontal axis indicates the number of rounds. 
    The dotted lines represent the total costs obtained by Grdy, LPH, and Agg-Lin.}
    \label{fig:trajectory}
\end{figure*}

\subsection{Consistency of the method}

This section analyzes the proposed method. The first result is a trivial consequence of avoiding multiple entries: 
\begin{prop}{\rm (Worst-case sample complexity)}
\label{prop_worst}
UCT with iterative tree finds an optimal solution in $N = |\ETall|$ rounds.
\end{prop}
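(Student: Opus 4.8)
The plan is a pigeonhole argument built on the ``avoiding multiple entries'' rule. Let $\mathcal{V}(n)\subseteq\ETall$ denote the set of nodes whose corresponding expansion vector $\bt\in\Theta$ has already been evaluated by DA after $n$ rounds, with $\mathcal{V}(0)=\emptyset$. First I would record the structural invariant of the marking scheme: a node is marked ``evaluated'' precisely when every leaf in its subtree has been evaluated. This follows by a short induction on height from the two marking rules (a leaf is marked when it is reached; an internal node is marked when all its children are marked). In particular, the root stays unmarked as long as $\mathcal{V}(n)\neq\ETall$, and an unmarked node always has an unmarked descendant leaf reachable through unmarked nodes.

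Next I would argue that $|\mathcal{V}(n)|\ge|\mathcal{V}(n-1)|+1$ whenever $\mathcal{V}(n-1)\neq\ETall$. In round $n$ the selection, development, and simulation steps together trace a path from the root to a leaf $l$ of $\ETall$; since the selection and simulation phases never enter a marked node, every node on this path is unmarked at the start of the round. In particular $l$ is unmarked, and because a leaf is marked exactly when it has been evaluated, $l$ has not been evaluated before. The simulation step then runs DA at $l$ (that is how $v_l$ is obtained), adding $l$ to $\mathcal{V}(\cdot)$, and the no-revisit rule guarantees no previously evaluated expansion is recomputed. Hence $|\mathcal{V}(n)|\ge n$ for every $n$ (with the count only reaching $|\ETall|$ sooner if $\mathcal{V}$ saturates early), so after at most $N=|\ETall|$ rounds we have $\mathcal{V}(N)=\ETall$: every node of the tree, internal or leaf, has been evaluated.

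It remains to connect $\ETall$ with the optimum of Eq.~\eqref{opt_main}. Every feasible expansion $\bt\in\Theta$ corresponds to a node of the iterative tree — list its seats hospital by hospital to obtain a root-to-node path of length $\sum_h t_h\le B$ that respects each $b_h$ — so the globally optimal $(\bx^\star,\bt^\star)$ has $\bt^\star$ sitting at some node $i^\star\in\ETall$ with $v_{i^\star}$ equal to the DA objective at $\bt^\star$, i.e.\ the optimal value of Eq.~\eqref{opt_main} (if one prefers to work only with leaves, the comparative-statics fact that enlarging capacities weakly improves every resident's welfare~\cite{Roth:CUP:1990} lets one pad $\bt^\star$ up to $\sum_h t_h=B$ without leaving the optimum). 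Since $i^\star\in\mathcal{V}(N)$, UCT has evaluated an optimal expansion by round $N$ and can return it.

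The step I expect to need the most care is the middle one: showing that the no-revisit mechanism never stalls, i.e.\ that as long as some node is still unevaluated the selection/simulation phases can always reach a fresh leaf. This is exactly what the height-induction invariant of the first paragraph supplies — an unmarked root has an unmarked descendant leaf reachable through unmarked nodes, and the selection rule picks only unmarked children, so the out-of-tree child $k$ reached by development is itself unmarked and simulation can descend from it to an unmarked leaf. It is worth spelling this out explicitly, together with the minor bookkeeping point that simulation evaluates $l$ even though $l$ need not be added to the search tree $\ET(n)$ in the usual MCTS sense; everything else is the counting argument above.
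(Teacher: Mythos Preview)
Your pigeonhole strategy is the right idea and is essentially the paper's approach, but the counting is off because of a misreading of the marking rule. You assert that ``a leaf is marked exactly when it has been evaluated,'' and from this deduce that the simulated leaf $l$ is fresh in every round. But the paper's mechanism marks the \emph{developed} node $k$ (the first node encountered outside $\ET(n)$), not the simulated leaf $l$; the two coincide only when $k$ is itself a leaf of $\ETall$. In an early round $k$ is typically internal, the random rollout evaluates some leaf $l$ far beneath it, and $l$ is left unmarked. Nothing then prevents a later round from developing a child $k'$ of $k$ and having simulation land on the very same $l$ again. So $|\mathcal V(n)|\ge|\mathcal V(n-1)|+1$ can fail, and the conclusion $\mathcal V(N)=\ETall$ is in any case impossible as written, since by your own definition $\mathcal V$ contains only nodes on which DA was actually run, i.e.\ only leaves.

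The fix is to count what the no-revisit rule actually controls: developed nodes. Each round adds exactly one new $k$ to $\ET(n)$ --- your height-induction invariant, read with ``marked'' meaning ``entire subtree already in $\ET$'' rather than ``already evaluated by DA,'' is precisely what shows selection can always reach the frontier of $\ET(n)$ while the root is unmarked. Hence after $|\ETall|$ rounds $\ET(N)=\ETall$; in particular the optimal leaf $l^\star$ was the developed node $k$ in some round, and in that round $k=l=l^\star$, so DA evaluated it. This is exactly the paper's one-line argument.
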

\begin{proof}
Since the development step adds a node to $\ET(n)$ at each round $n$, it follows that it obtains the optimal solution if $N$ reaches the number of the nodes $|\ETall|$.
\end{proof}


We further propose a nontrivial analysis by following the analysis of the original UCT by \cite{kocsis06}. The largest difference is that \citeauthor{kocsis06}~\shortcite{kocsis06} analyzed the average quality of the selected action (i.e., regret), whereas we are interested in the quality of the best path, which corresponds to the optimal solution of the capacity expansion problem.

Let $\hatmu_{i,m}=V_i/m$ be the mean of $v_i$ over the first $m$ visits, and $U_{i,m,n} = \hatmu_{i,m} + \Conf\sqrt{\frac{\log n}{m}}$ be the corresponding UCB value. Let $W$ be the maximum number of the children of a node and $D_T$ be the depth of the tree.\footnote{That is $W=H, D_T=B$ for iterative and iterative priority tree, or $W=B,D_T=H$ for the batch tree.} 
Let $v^*_i$ be the maximum value of the leaves that are descendants of node $i$. Let $v^* = v^*_r$ be the global optimal value, where $r$ is the root node. The following assumption is essentially the same as the one in \cite{kocsis06}:
\begin{assp}{\rm (Confidence bound)}
\label{assp_confbound}
We assume 
\begin{equation}\label{ineq_confbound}
\Prob\left[ U_{i,n,m} \ge v^*_i \right] \ge 1 - 1/n^2.
\end{equation}
\end{assp}
\begin{definition}{(\rm $\Delta$-optimal tree)}
Let $\Delta > 0$ be arbitrary. The $\Delta$-optimal subtree is defined recursively as follows. First, it includes the root of the original tree. For each node $i$ in the subtree, add each children $c$ such that $v_i^* - v_c^* \le \Delta$. In other words, $\Delta$-optimal subtree is a subtree where each edge is suboptimal at most $\Delta$.
Let $S(\Delta)$ be the number of the nodes in the $\Delta$-optimal subtree.
\end{definition}

\begin{thm}\label{thm_complexity}
If Assumption \ref{assp_confbound} holds for any node $i$, then with probability at least $1 - o(N^{-1})$, UCT finds at least one node in the $\Delta$-suboptimal tree if $N$ satisfies
\begin{equation}\label{ineq_suffT}
N > W S(\Delta)
\frac{(\Conf)^2 \log N}{\Delta^2}.
\end{equation}
\end{thm}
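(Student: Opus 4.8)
The plan is to adapt the analysis of the original UCT from \cite{kocsis06}, but instead of controlling average regret, to track the best-leaf values $v^*_i$ directly and argue by a covering/pigeonhole count over the $\Delta$-optimal subtree. Fix the horizon $N$. Define the \emph{favorable event} $\mathcal{G}$ as the intersection, over the nodes $i$ encountered during the $N$ rounds and the round indices/visit counts that arise, of the optimism events $\{U_{i,n,m}\ge v^*_i\}$ of Assumption~\ref{assp_confbound}; a union bound over these events, each of failure probability at most $n^{-2}$, will give $\Prob[\mathcal{G}]\ge 1-o(N^{-1})$, and everything after that is deterministic on $\mathcal{G}$. The one fact about the statistics that I will use \emph{without} any assumption is a matching pessimism bound: every reward backpropagated through a node $i$ is the value $v_l$ of some leaf $l$ that descends from $i$, and $v^*_i$ is by definition the maximum of $v_l$ over such leaves, so the running mean obeys $\hatmu_i = V_i/N_i \le v^*_i$ at every round.

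The core of the argument is the following. Let $\ET^\Delta$ denote the $\Delta$-optimal subtree (the ``$\Delta$-suboptimal tree'' of the statement), with $S(\Delta)$ nodes; note that for every internal node $i$ of $\ET^\Delta$ the value-maximizing child $c^*_i$, for which $v^*_{c^*_i}=v^*_i$, lies in $\ET^\Delta$, because $v^*_i$ equals the maximum of $v^*_c$ over the children $c$ of $i$. Suppose that at some round the selection path reaches an internal $\ET^\Delta$-node $i$ and then moves to a child $b$ with $v^*_i-v^*_b>\Delta$; call this a \emph{transgression at $i$}. On $\mathcal{G}$ we have $U_{c^*_i}\ge v^*_{c^*_i}=v^*_i$, whereas $U_b=\hatmu_b+\Conf\sqrt{\log N_i / N_b}\le v^*_b+\Conf\sqrt{\log N_i / N_b}$ by pessimism. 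Since $b$ was the UCB-maximizer, $U_b\ge U_{c^*_i}$, hence $\Conf\sqrt{\log N_i / N_b}> v^*_i-v^*_b>\Delta$, i.e. $N_b< \Conf^2\log N_i/\Delta^2\le \Conf^2\log N/\Delta^2$. Thus each fixed (internal $\ET^\Delta$-node $i$, bad child $b$) pair can account for a transgression at most $\lceil\Conf^2\log N/\Delta^2\rceil$ times over all $N$ rounds, and since $\ET^\Delta$ has at most $S(\Delta)$ internal nodes and each node has at most $W$ children, the total number of transgressing rounds is at most $W\,S(\Delta)\,\Conf^2\log N/\Delta^2$.

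To finish, call a round \emph{successful} if its selection path reaches a leaf of $\ET^\Delta$; since every leaf of $\ET^\Delta$ is a leaf of $\ETall$ (each internal $\ET^\Delta$-node still has the child $c^*_i$), a successful round means UCT has located a node of the $\Delta$-optimal subtree. If a round is not successful, then tracing its selection path from the root either (i) it transgresses at some node of $\ET^\Delta$, or (ii) it stays within $\ET^\Delta$ and leaves the current subtree $\ET(n)$ at an internal $\ET^\Delta$-node developed for the first time. Case (ii) occurs at most $S(\Delta)$ times in total, one fresh node per occurrence. Hence the number of unsuccessful rounds is at most $W\,S(\Delta)\,\Conf^2\log N/\Delta^2 + S(\Delta)$; since $\Conf^2\log N/\Delta^2\ge 1$ in any regime of interest the additive $S(\Delta)$ is dominated, so whenever $N>W\,S(\Delta)\,\Conf^2\log N/\Delta^2$ (up to absorbing a constant into $\Conf$) some round in $\{1,\dots,N\}$ is successful. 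Together with $\Prob[\mathcal{G}]\ge 1-o(N^{-1})$ this yields the theorem.

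The main obstacle is the bookkeeping in the last paragraph: one must pin down, for the precise selection/one-node-development/random-rollout loop of our algorithm (including the ``avoid multiple entries'' marking of evaluated leaves), the exact dichotomy ``transgression or first development of an $\ET^\Delta$-node'' for an unsuccessful round, and verify that marking evaluated leaves cannot indefinitely block the path from descending to a leaf of $\ET^\Delta$. A secondary delicate point is the union bound defining $\mathcal{G}$: the failure probabilities $n^{-2}$ should be summed only over the incumbent-child comparisons that the transgression argument actually invokes (one per internal node of $\ET^\Delta$, over the rounds at which it is the incumbent), so that the tail $\sum_n n^{-2}$ contributions remain of the order claimed in the statement rather than being summed over all nodes and rounds.
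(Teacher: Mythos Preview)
Your proposal is correct and follows essentially the same route as the paper: both bound the number of selections of a child $b$ outside the $\Delta$-optimal subtree from a parent $i$ inside it by comparing $U_b\le v^*_b+\Conf\sqrt{\log N_i/N_b}$ (deterministic pessimism, since every reward backpropagated through $b$ is at most $v^*_b$) with $U_{c^*_i}\ge v^*_{c^*_i}=v^*_i$ under Assumption~\ref{assp_confbound}, and then count over the at most $WS(\Delta)$ such boundary pairs. The only place the paper is more concrete than your sketch is the union bound you correctly flag as delicate: it restricts the optimism events to parent-visit counts $n\ge (N-1)/(WS(\Delta))$, so that $\sum_{n\ge (N-1)/(WS(\Delta))} n^{-2}=O\bigl((WS(\Delta))^2/N\bigr)$ per node and the total failure probability is of order $W^2 S(\Delta)^3/N$.
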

\ifijcai
The proof of Theorem~\ref{thm_complexity} is in the full version.
\else
The proof of Theorem~\ref{thm_complexity} is in Appendix~\ref{sec:Thorem 5}.
\fi

\begin{remark}{\rm (Implication)}
The solution of the node $i$ in $\Delta$-suboptimal tree is at least
\[
v_i \ge v^* - D_T \Delta.
\]
Therefore, letting $\Delta$ be sufficiently small and $N$ be sufficiently large, UCT finds an almost optimal solution. The required number of $N$ depends on $S(\Delta)$, which corresponds to the number of close-to-optimal children.
\end{remark}

\ifijcai
\else
Although our analysis gives a complexity bound based on the structure of the tree, we still think this analysis (as well as any existing analysis of UCT) is not very satisfying. 
We discuss the limitation of this analysis in Appendix~\ref{subsec_analysis_rev}.
\fi
 


\section{Evaluation}
This section empirically evaluates our algorithm via synthetic and real datasets. 
%
%
%
%
We compare our algorithms 
with the ones by~\cite{bobbio2021capacity}: the greedy algorithm (Grdy), the linear programming-based heuristic (LPH), and the aggregated linearization (Agg-Lin). 

Let us enumerate existing algorithms: First, Grdy is an algorithm that allocates $B$ expansion seats iteratively such that each allocation maximizes the marginal cost reduction. 
Second, LPH first computes a minimum cost matching without stability constraints via minimum-cost flow to fix the expansion seats, and then uses DA to obtain a stable matching.
Finally, Agg-Lin is an exact method that uses McCormick envelopes and solves a mixed integer programming, which is computationally intensive.
\ifijcai
\else
\footnote{See Appendix~\ref{exact_and_uct} for the comparison between general methods for solving mixed integer programming and our UCT.}
\fi
We warm-up Agg-Lin as suggested by \cite{bobbio2021capacity}. 

For our UCT method, we set $N = B \times 10^3$ in synthetic data experiments, $N = B \times 10^2$ in real data experiments.
The value of $\Conf$, which determines the tradeoff between exploration and exploitation, is set to be $\sqrt{0.002}$. We consider this parameter is robust enough to cover all settings. 
We implement our simulation with Python 3 and solve the mathematical programming using Gurobi, which is in favor of Agg-Lin.
We restrict the runtime for each method to one hour.

Let us describe the procedure of generating the datasets. 
First, we build two synthetic datasets. \textbf{Set~1} involves $B$ but each hospital does \textit{not} have its expansion limit $b_h$. \textbf{Set~2} involves $B$ as well as $b_h$ for each hospital. Note that the setting of Set~1 is similar to the experimental section in \cite{bobbio2021capacity}. Regarding the preference among the residents, we follow the setting \cite{Goto:aij:2016}, which involves a correlation parameter $\alpha \ge 0$. Larger value of $\alpha$ implies a stronger correlation among the preferences. Note that $\alpha=0$ (no correlation) corresponds to the setting of \cite{bobbio2021capacity}.
We set $D := |\ED|=1,000$, and conduct experiments varying the parameters $H := |\EH|$, $B$, $b_h$, and $\alpha$.
For each combination of parameters, we average the results for $10$ instances.
\ifijcai
\else
The details of the data generation are in Appendix~\ref{sec_synth_dgp}.
The limitation of this analysis in Appendix~\ref{subsec_analysis_rev}.
\fi


Second, we generate 
the dataset, \textbf{JRMP}, based on Japan Residency Matching Program 
2007~\cite{kamada:aer:2015}, which matches medical hospital students (residents) with residency training programs. 
We extracted $1,287$ residents in the Tokyo district who match with $50+1$ ($1$ for a dummy) hospitals with a resident-side preference. We set $B \in \{10, 30\}$ for the limit of capacity expansion.
For each $B$, we average the results for $10$ instances.
\ifijcai
\else
We place the details of the dataset in Appendix \ref{sec_real_dgp}. 
\fi

Tables~\ref{tb:average_gap_synthone} and \ref{tb:run_time_synthone} illustrate the quality of solutions and its runtime for Set~1, respectively. 
\ifijcai
The results for Set~2, which is omitted due to page limitation, are found in the full version.
\else
We place the results for Set~2 in Appendix~\ref{results_settwo}.
\fi
Also, we place  Tables~\ref{tb:average_gap_jrmp} and \ref{tb:run_time_jrmp} for the JRMP data. 
Note that Table \ref{tb:average_gap_synthone} indicates the average percentage gap in the total resident ranking (TRR) defined in Eq.~\eqref{eq_utility}: 
\[
100\times \frac{
\text{(TRR of the method)} 
-
\text{(TRR of Agg-Lin)} 
}{
\text{(TRR of the method)} 
}.
\]

Agg-Lin always outputs an optimal solution upon the completion. However, as in Table~\ref{tb:run_time_synthone}, it does not run within one hour for large instances with positive correlation $\alpha$. In that case, we use the temporal value that Gurobi outputs, which can be suboptimal and the percentage gap can be negative.

Among the algorithms we consider, the two greedy methods (Grdy and LPH) run very fast ($< 10$ seconds) and output a suboptimal solution. 
Both of them are outperformed by all variants of ours.
In particular, BT outperformed iterative tree and IPT, and 
using the popularity and envy-based orderings outperformed the random one. 

Figure \ref{fig:trajectory} describes the objective value as a function of rounds, which indicates an early termination of BT often has a satisfying solution. 
In summary, BT with envy/popularity-based ordering yields the best results among our algorithms. Our algorithms run 2--20 times faster than Agg-Lin, and the quality of the solution is close to optimal.




\section{Conclusion}

This paper sheds light on the capacity expansion in the two-sided matching to consider a flexible allocation of extra seats within a given limit beforehand.
To handle this NP-Complete problem, we develop a UCT-based search method and verify that it outperforms the previous approaches. 
Future works 
include (1) extending it to matchings with constraints that need not admit stable matching and (2) utilizing other tree search algorithms such as Nested Monte-Carlo Search 
\cite{DBLP:conf/ijcai/Cazenave09,DBLP:conf/ijcai/Rosin11}. 


\if0
Interesting directions of the future researches include:
\begin{itemize}
\item Diverse recommendations \cite{Bosc2018}. UCT sequentially sequentially traverses the search tree and is useful to recommend more than one solution, from which the policymaker chooses the most favorable one.
\item Regional quotas \cite{GotoIKKYY16}. While we consider an expansion budget $B$ over all hospitals, we may consider a multi-region matching problem where each region has its own budget. 
\item Other objective functions. We consider the maximization of the resident-side utility. 
\end{itemize}
\fi

\clearpage

\bibliographystyle{named}
\bibliography{references_ijcai22,approxmatching}

\clearpage

\appendix

\section{Comparison Between Branch-and-bound and UCT}
\label{exact_and_uct}

The aggregated linearization method \cite{bobbio2021capacity} linearlizes the original objective (i.e., Eq.~\eqref{opt_main}) and uses an off-the-shelf solver of mixed integer programming. The branch-and-bound (BB) method is one of the most promising methods for the mixed integer programming.\footnote{Note that BB is also adapted by Gurobi, which combines BB with many efficient heuristics  \url{https://www.gurobi.com/resource/mip-basics/}} 
BB for a minimization problem involves a lower bound (= solution of the linear relaxation) and an upper bound (= the current best feasible solution). BB and UCT both utilize a tree structure and are an iterative process of refining a solution. Here are the differences between BB and UCT.
\begin{itemize}
\item Branch-and-bound, applied to our problem, searches the (linearlized version of) joint space of $(\bx, \bt) \in \{0,1\}^\EE \times \Theta$ whereas UCT only searches for the space of $\bt \in \Theta$, which is significantly smaller than the joint space. UCT exploits the fact that DA optimizes $\bx$ given $\bt$.
\item For obtaining a feasible solution, general solvers adopt general heuristics that are ignorant of the problem structure of capacity expansion, whereas UCT exploits the structure of the problem (i.e., the priority of the hospitals and good tree representation on $\Theta$). Giving a good feasible solution enables a more aggressive branching, and we may combine UCT with branch-and-bound in that sense.
\end{itemize}

In summary, UCT exploits the structure of the problem unlike general solvers with BB. 
In our simulation, we demonstrate UCT provides a better solution than the intermediate solution of the exact method for large instances with a shorter amount of runtime, even though the implementation of UCT in Python is in favor of the exact method.

\section{Algorithmic Description of UCT}

Algorithm \ref{alg_proposed} describes the steps in our UCT method. UCT can be viewed as a combination of multi-armed bandit algorithm and monte-carlo search. 

\begin{algorithm}[!t]
\caption{Upper Confidence Tree for Searching Capacity Expansion}
\label{alg_proposed}
\begin{algorithmic}
    \REQUIRE \# of Rounds $N$. 
    \STATE Initialize the tree with the root node $\ET(1) = \{r\}$.
    \FOR{$n=1,2,\dots,N$}
        \STATE Set the current node $i$ to the root node: $r$.
        \WHILE{Current node $i$ is in $\ET(n)$}
            \STATE Find the most promising child $c$ and set the current node to $c$, where
            \begin{equation}\label{ineq_ucb}
                \argmax_{c'} \UCB(c').
            \end{equation} \COMMENT{Selection}
        \ENDWHILE
        \STATE $k \leftarrow i$.
        \STATE Add the current node $k$ to the tree $\ET(n+1) \leftarrow \ET(n) \cup \{k\}$ and initialize the related statistics $(V_{k}, N_{k}) = (0, 0)$. \COMMENT{Development}
        \STATE Randomly select a leaf $l$ of $\ETall$, which is one of the descendants of the current node $k$. Evaluate $v_l$ that is the result of DA with the corresponding expansion. \COMMENT{Simulation} 
        \STATE Backpropagate value $v_l$ from $k$ up to the root node: For each node $i$ between $k$ and $r$, it updates 
        $V_i \leftarrow V_i + v_l, N_i \leftarrow N_i + 1$.\COMMENT{Backpropagation}
    \ENDFOR
\end{algorithmic}
\end{algorithm}


\section{Proof of Theorem \ref{thm_complexity}}
\label{sec:Thorem 5}

\begin{proof}
Let $N' = N - 1$.
We show that, with a high probability, any node that is \textit{not} in the $\Delta$-optimal subtree is visited as most
$\max\left(
\frac{N'}{W S(\Delta)}, \frac{(\Conf)^2 \log N}{\Delta^2}
\right)
$ rounds. 
This implies that UCT spends at most $\max(N-1, W S(\Delta)
\frac{(\Conf)^2 \log N}{\Delta^2})
$ rounds\footnote{The number of the nodes that has an  incoming edge from a node of $\Delta$-optimal subtree is at most $W S(\Delta)$.} in a path that includes a node outside the $\Delta$-optimal subtree; from which it visits a path in the $\Delta$-optimal subtree if 
\[
N > 
W S(\Delta) \frac{(\Conf)^2 \log N}{\Delta^2}
\]
rounds.

Assumption \ref{assp_confbound} implies that the event
\begin{equation}\label{ineq_unifconf}
\bigcup_{
n \ge \max\left(
\frac{N'}{W S(\Delta)}, \frac{(\Conf)^2 \log N}{\Delta^2}
\right)
}
\left\{
U_{i,n,N_c(n)} \ge v^*_i
\right\}
\end{equation}
occurs with probability at most
\[
\sum_{n \ge \frac{N'}{WS(\Delta)}}^N
\frac{1}{n^2} \le 
N \left(\frac{WS(\Delta)}{N'}\right)^2
\le 
\frac{2(WS(\Delta))^2}{N}
\]
and its union bound over $S(\Delta)$ nodes occurs with probability at most $2W^2(S(\Delta))^3 N^{-1}$. In the follows, we assume Eq.~\eqref{ineq_unifconf} for any node in the $\Delta$-suboptimal tree. 

Let $i$ be an arbitrary node in the $\Delta$-suboptimal tree and $n$ be an arbitrary round. Suppose that UCT visits a node $j$ that is not in the $\Delta$-suboptimal tree for more than 
\begin{equation}\label{ineq_njmin}
N_j(n) > \max\left(
\frac{N'}{W S(\Delta)}, \frac{(\Conf)^2 \log N}{\Delta^2}
\right)
\end{equation}
times. 
We have 
\begin{align}
U_{j,n,N_j(n)}
&\ge 
U_{i,n,N_c(n)}\\
&\ge v^*_i \text{\ \ \ (by Ineq.~\eqref{ineq_unifconf})}
\end{align}
Moreover, by definition of the UCB value, we have
\begin{align}
U_{j,n,N_j(n)} 
&:=
\hatmu_{j,N_j(n)} + \Conf\sqrt{\frac{\log n}{N_j(n)}}\\
&\le 
v_j^* + \Conf\sqrt{\frac{\log n}{N_j(n)}}\\
&\le 
v_j^* + \Conf\sqrt{\frac{\log N}{N_j(n)}}\\
\end{align}
Combining these two equations yields
\begin{equation}
N_j(n)
\le
\frac{(\Conf)^2 \log N}{(\mu_i^*-\mu_j^*)^2}
\le
\frac{(\Conf)^2 \log N}{\Delta^2}
\end{equation}
which contradicts with Eq.~\eqref{ineq_njmin}.
By contradiction, node $j$ is never visited again after Eq.~\eqref{ineq_njmin} is satisfied.

In summary, with probability at least $1-2W^2(S(\Delta))^3 N^{-1}$, Eq.~\eqref{ineq_unifconf} holds for any node in the $\Delta$-suboptimal tree. Under Eq.~\eqref{ineq_unifconf}, any node that is not in the $\Delta$-optimal subtree is visited as most
$\max\left(
\frac{N'}{W S(\Delta)}, \frac{(\Conf)^2 \log N}{\Delta^2}
\right)
$ rounds, from which it visits at least one path in the $\Delta$-optimal subtree if 
\[
N > 
W S(\Delta) \frac{(\Conf)^2 \log N}{\Delta^2}.
\]
\end{proof}

\section{Comparison of Existing Theoretical Results}
\label{subsec_analysis_rev}

Assumption \ref{assp_confbound}, which is essentially the same as the one in the literature,\footnote{Eq.~(3)--(4) in \cite{kocsis06}} is very strong and is not satisfied unless the tree is fully developed.\footnote{Here, we use the word ``development'' to denote the expansion of the upper confidence tree so that it is not confused with the capacity expansion.} We are not sure if any analysis of UCT gets rid of this limitation. Still, given the overwhelming utility of the UCT method in many practical domains, we consider the value of UCT cannot be overstated. 

Recently, there are more solid analyses that rely on less restrictive conditions. However, these analyses tend to modify the original UCT and do not apply to our case. One of the seminal papers by \cite{zohar2014} analyzed a variant of UCT called BLUE. To get rid of stringent assumptions, BLUE considers a two-phase strategy that consists of the development and evaluation phases. Another notable work by \cite{KaufmannK17} brought a solid analysis on a fully developed game tree.

Note that the game tree search is inherently different from our problem. In a two-player zero-sum game, the best move for the black player is the worst move of the white player, and thus the value is flipped at each edge, which enables safe pruning of the tree (i.e., alpha-beta cuts \cite{alphabeta}, or its confidence-based pruning in \cite{KaufmannK17}).

\section{Details of simulations}

\subsection{Computational environments:}
Our program is implemented in Python 3. Linear programming problems and mixed integer programming problems are solved by the Gurobi optimizer. We consider this setting is in favor of the aggregated linearization method, which fully utilizes the power of the state-of-the-art optimizer, whereas Python implementation has some space for improvement (e.g., reimplementation via low-level programming languages).  
All simulations are conducted on the Google cloud platform (e2-standard-8 instance, eight-core, 32GB memory). All programs are single-threaded. 


\subsection{Details of synthetic data}
\label{sec_synth_dgp}

We build two sets of experiments. \textbf{Set~1} involves $B$ but each hospital does \textit{not} have its expansion quota $b_h$. \textbf{Set~2} involves $B$ as well as $b_h$ for each hospital $h$. Note that the setting of Set~1 is similar to the experimental section in \cite{bobbio2021capacity}. 

We generate preference lists for hospitals and capacities uniformly at random satisfying the following provisions: (1) no hospital has capacity zero; (2) $\sum_{h\in \EH} q_h = D$.
Preference lists of residents are generated by the following procedure: (1) generate a common preference vector $\bp_{\mathrm{common}} \in [0,1]^{\EH}$ uniformly at random; (2) generate an vector $\bp_d \in [0,1]^{\EH}$ for each resident $d\in \ED$ uniformly at random; (3) calculate each resident's preference by $(1-\alpha) \bp_d + \alpha \bp_{\mathrm{common}}$ where $\alpha\in [0,1]$ is the parameter that controls the correlation level of resident preferences; (4) calculate the value $\rank_d(h)$ as the order of the features in $\bp_d$.
For Set~2 experiments, we generate $b_h\in [0, B)$ uniformly at random satisfying $\sum_{h\in \EH} b_h\in [B, B\times H)$.

We set $D=1,000$, and conduct experiments varying the parameters $D$, $B$, and $\alpha$.
For each combination of parameters, we average the results for $10$ instances.


\subsection{Details of real data }
\label{sec_real_dgp}

We tested the performance of the proposed method in the dataset of  Japan Residency Matching Program (JRMP) 2007 \cite{kamada:aer:2015}, which matches medical hospital doctors with residency training programs. For ease of discussion, we call each training program ``hospital''. 
Unlike its U.S. counterpart (i.e., the National Resident Matching Program), JRMP does not have ``match variation'' (e.g., consideration of married couples) and adopts the resident-proposing DA algorithm. There are approximately 10,000 doctors in JRMP 2007. Due to computational limitations, we only used a subset of the data; each region has a regional cap, and we focus on the Tokyo region that admits $1,287$ doctors. The original data contains $123$ hospitals, and we clustered them into $C = 50$ (batched) hospitals. 
Each hospital $h$ has a maximum number of admissions $a_h$, which we set as $q_h + b_h = a_h$. We set $q_h$ to be proportional to $a_h$ such that $\sum_h a_h = 1,287$. The dataset also contains the number of applications for each hospital. There are total $6,233$ applications ($4.84$ per hospital). Unfortunately, we do not have the information of which doctor applied to which hospital: We randomly allocate applications such that (1) each doctor applies to at most $8$ hospitals\footnote{Which aligns the original data \cite{kamada:aer:2015}.} and (2) the number of applications for each hospital is the same as the original data. 
Since the preference is partial (i.e., each doctor only ranks a very limited number of hospitals), we introduce a dummy hospital that has infinite capacity and its rank for each doctor is immediately after the least preferred hospital that the doctor applies. By the definition of stable matching, doctors who do not match the hospitals they applied match the dummy hospital.

In summary, we have $1,287$ residents (doctors) who match $50+1$ ($1$ for a dummy) hospitals with a doctor-side preference. We set $B \in \{10, 30\}$ for the limit of capacity expansion.
For each $B$, we average the results for $10$ instances.

\begin{table*}[t]
    \centering
    \small
    \begin{tabular}{|rrr||r|r||r|r|r|r|r|r|r|} \hline
        \multicolumn{1}{|c}{$H$} & \multicolumn{1}{c}{$B$} & \multicolumn{1}{c||}{$\alpha$} & \multicolumn{2}{c||}{Baseline} & \multicolumn{7}{c|}{\textbf{UCT (proposed)}} \\ \cline{4-12}
        &&& \multicolumn{1}{c|}{LPH} & \multicolumn{1}{c||}{Grdy} & \multicolumn{1}{c|}{Iterative} & \multicolumn{1}{c|}{IPT-R} & \multicolumn{1}{c|}{IPT-P} & \multicolumn{1}{c|}{IPT-E} & \multicolumn{1}{c|}{BT-R} & \multicolumn{1}{c|}{BT-P} & \multicolumn{1}{c|}{BT-E} \\ \hline
        $15$ & $30$ & $0.0$ & $20.3$ & $28.0$ & $18.5$ & $25.9$ & $13.6$ &  $5.0$ & $18.2$ & $13.5$ &  $4.4$ \\ 
        $15$ & $30$ & $0.2$ &  $4.1$ &  $3.5$ &  $2.6$ &  $3.4$ &  $1.2$ &  $1.5$ &  $2.0$ &  $1.1$ &  $1.0$ \\ 
        $15$ & $30$ & $0.4$ &  $1.5$ &  $1.0$ &  $0.5$ & $0.02$ & $-0.4$ & $-0.4$ & $0.00$ & $-0.8$ & $-0.9$ \\ \hline
    \end{tabular}
    \caption{Average percentage gaps between the solution found by the Agg-Lin and the solution found by each method for Set~2 experiments. }
    \label{tb:average_gap_synthtwo}
\end{table*}

\begin{table*}[t]
    \centering
    \small
    \begin{tabular}{|rrr||r|r|r||r|r|r|r|r|r|r|} \hline
        \multicolumn{1}{|c}{$H$} & \multicolumn{1}{c}{$B$} & \multicolumn{1}{c||}{$\alpha$} & \multicolumn{3}{c||}{Baseline} & \multicolumn{7}{c|}{\textbf{UCT (proposed)}} \\ \cline{4-13}
        &&& \multicolumn{1}{c|}{Agg-Lin} & \multicolumn{1}{c|}{LPH} & \multicolumn{1}{c||}{Grdy} & \multicolumn{1}{c|}{Iterative} & \multicolumn{1}{c|}{IPT-R} & \multicolumn{1}{c|}{IPT-P} & \multicolumn{1}{c|}{IPT-E} & \multicolumn{1}{c|}{BT-R} & \multicolumn{1}{c|}{BT-P} & \multicolumn{1}{c|}{BT-E} \\ \hline
        $15$ & $30$ & $0.0$ &  $247.95$ & $0.02$ & $2.56$ & $143.65$ & $137.46$ & $136.33$ & $127.89$ & $111.65$ & $114.10$ & $109.93$ \\ 
        $15$ & $30$ & $0.2$ & $2752.07$ & $0.03$ & $6.02$ & $426.97$ & $413.66$ & $412.17$ & $416.50$ & $356.22$ & $389.07$ & $389.58$ \\ 
        $15$ & $30$ & $0.4$ & $3600.08$ & $0.04$ & $9.38$ & $657.50$ & $624.13$ & $625.07$ & $632.34$ & $592.58$ & $618.25$ & $619.15$ \\ \hline
    \end{tabular}
    \caption{Average run times for Set~2 experiments (seconds).}
    \label{tb:run_time_synthtwo}
\end{table*}

\subsection{Simulation results for Set~2}
\label{results_settwo}

Tables \ref{tb:average_gap_synthtwo} and \ref{tb:run_time_synthtwo} show the results for Set~2. The high-level conclusion of these results is not very different from the results for Set~1.

\end{document}